\let\subsubsection\paragraph
\renewcommand*{\backrefalt}[4]{%
	\ifcase #1 \footnotesize{(Not cited.)}%
	\or        \footnotesize{(Cited on p.~#2)}%
	\else      \footnotesize{(Cited on pp.~#2)}%
	\fi}
\def\qed{\rule[0pt]{5pt}{5pt}\par\medskip}
\newcommand{\qedhere}{\hfill ~\qed}
\newenvironment{proof}{{\noindent\bf Proof.}}{\qedhere}
\newcommand{\1}{\mathbf{1}}
\newcommand{\0}{\mathbf{0}}
\newtheorem{thm}{Theorem}
\newtheorem{lem}[thm]{Lemma}
\newtheorem{prop}[thm]{Proposition}
\newtheorem{rem}[thm]{Remark}
\newtheorem{defn}[thm]{Definition}
\newtheorem{cor}[thm]{Corollary}
\crefname{thm}{Theorem}{Theorems}
\crefname{problem}{Problem}{Problems}
\crefname{lem}{Lemma}{Lemmas}
\crefname{prop}{Proposition}{Propositions}
\crefname{rem}{Remark}{Remarks}
\crefname{defn}{Definition}{Definitions}
\crefname{cor}{Corollary}{Corollaries}
\crefname{assumption}{Assumption}{Assumptions}
\newcommand{\anc}[1]{\mathcal{A}(#1)}
\newcommand{\des}[1]{\mathcal{D}(#1)}
\def\thm@space@setup{%
	\thm@preskip=\parskip \thm@postskip=0pt
}
\begin{document}
\title{\vspace{-1cm}Guaranteed Stability Margins for Decentralized\\ Linear Quadratic Regulators}
\date{}
\author{%
	Mruganka Kashyap\thanks{M.~Kashyap is with the Department of Electrical and Computer Engineering at Northeastern University,\newline Boston, MA 02115, USA. (e-mail: kashyap.mru@northeastern.edu).}
	\and 
	Laurent Lessard\thanks{L.~Lessard is with the Department of Mechanical and Industrial Engineering at Northeastern University,\newline Boston, MA 02115, USA. (e-mail: l.lessard@northeastern.edu).}
	}
\maketitle

\begin{abstract}
	It is well-known that linear quadratic regulators (LQR) enjoy guaranteed stability margins, whereas linear quadratic Gaussian regulators (LQG) do not. In this letter, we consider systems and compensators defined over directed acyclic graphs. In particular, there are multiple decision-makers, each with access to a different part of the global state. In this setting, the optimal LQR compensator is dynamic, similar to classical LQG. We show that when sub-controller input costs are decoupled (but there is possible coupling between sub-controller state costs), the decentralized LQR compensator enjoys similar guaranteed stability margins to classical LQR. However, these guarantees disappear when cost coupling is introduced. 
\end{abstract}

\section{Introduction}\label{sec:intro}

\bigskip
Multi-agent systems with communication constraints occur naturally in engineering applications, including bilateral teleoperation systems in remote robotic surgery and unmanned aerial vehicles (UAVs). For example, a swarm of UAVs could be deployed to survey an uncharted region or to optimize geographic coverage while combating forest fires. Information transfer within the swarm could be limited due to geographic constraints such as mountains blocking line-of-sight communications between certain UAVs. 

It is known that certain decentralized information-sharing architectures lead to tractable optimal control problems~\cite{rotkowitz2005characterization,ho1972team}. One such problem is \emph{decentralized LQR} where the communication constraints have a poset-causal architecture \cite{swigart_spectral,shah2013cal}. Although this is a state-feedback problem, the optimal decentralized controller is \emph{dynamic} and has an observer-regulator structure reminiscent of output-feedback LQG regulators.

Robustness is an important aspect of controller design, because it ensures that the controller can effectively and reliably control a system in the presence of disturbances, plant uncertainty, or unmodeled dynamics. In the centralized case, LQR controllers enjoy guaranteed gain and phase margins \cite{safonov1977gain,lehtomaki1981robustness}. However, linear quadratic Gaussian (output feedback) regulators, have no robustness guarantees \cite{doyle1978guaranteed}.

The robustness properties of decentralized LQR are not immediately apparent, since decentralized LQR shares commonalities with both centralized LQR (uses state feedback), and centralized LQG (optimal controller is dynamic). To the best of our knowledge, this is an open problem.

In this letter, we show that decentralized LQR enjoys similar stability margins to classical LQR if the input matrix ($B$) and control weighting matrix ($R$) are block-diagonal. We also show via counterexample that these assumptions are necessary.

In \cref{sec:LQR,sec:decentralized_lqr} we review classical stability margins for LQR and more recent work on decentralized LQR synthesis. In \cref{sec:main} we present our main results, and in \cref{sec:discussion,sec:conclufuture} we present our counterexample and conclude.
\looseness=-1

\section{Classical LQR stability margins}\label{sec:LQR}

Consider the continuous-time linear time-invariant (LTI) dynamical system $\dot x = Ax + Bu$,
where $x(t)\in\R^n$ and $u(t) \in \R^m$. The linear quadratic regulator (LQR) problem is to find the causal state-feedback policy that minimizes the quadratic cost
\begin{equation}\label{eq:cost}
J = \int_0^\infty \bigl( x(t)^\tp Q x(t) + u(t)^\tp R u(t) \bigr)\,\mathrm{d}t.
\end{equation}
\begin{prop}\label{prop:lqr}
	Suppose $(A,B)$ is stabilizable, $(Q,A)$ is detectable, and $Q\succeq 0$ and $R\succ 0$. The optimal LQR policy is $u(t) = F x(t)$, where $F = -R^{-1} B^\tp X$, and $X\succeq 0$ is the unique stabilizing solution to the algebraic Riccati equation
	$A^\tp X + XA + Q - XBR^{-1} B^\tp X = 0$.
\end{prop}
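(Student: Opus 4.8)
The plan is to establish optimality by a completion-of-squares argument against the candidate value function $V(x)=x^\tp X x$, after first settling the existence and uniqueness of the stabilizing solution $X$. For the latter I would invoke standard Riccati theory: stabilizability of $(A,B)$ guarantees that the algebraic Riccati equation admits a symmetric solution $X$ for which $A - BR^{-1}B^\tp X$ is Hurwitz; the hypotheses $Q\succeq 0$ and $(Q,A)$ detectable then force $X\succeq 0$ (integrate the Lyapunov identity for the stable closed loop against any initial condition); and if $X_1,X_2$ were two stabilizing solutions, subtracting their Riccati equations shows $X_1-X_2$ satisfies a Sylvester/Lyapunov equation whose coefficient matrices are Hurwitz, hence $X_1=X_2$. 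I would either cite this or record it as a short preliminary lemma.

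Granting such an $X$, the core computation is to differentiate $x^\tp X x$ along an arbitrary trajectory of $\dot x = Ax+Bu$, substitute $A^\tp X + XA = XBR^{-1}B^\tp X - Q$ from the Riccati equation, and complete the square in $u$, which yields
\[
x(t)^\tp Q x(t) + u(t)^\tp R u(t) \;=\; \bigl(u(t)-Fx(t)\bigr)^\tp R\,\bigl(u(t)-Fx(t)\bigr) \;-\; \frac{\mathrm{d}}{\mathrm{d}t}\bigl(x(t)^\tp X x(t)\bigr),
\]
with $F=-R^{-1}B^\tp X$. Integrating from $0$ to $\infty$ gives $J = x(0)^\tp X x(0) - \lim_{t\to\infty} x(t)^\tp X x(t) + \int_0^\infty (u-Fx)^\tp R(u-Fx)\,\mathrm{d}t$. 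Since $R\succ0$ the integrand is nonnegative, so once the terminal term is shown to vanish we obtain $J \ge x(0)^\tp X x(0)$, with equality exactly when $u = Fx$ almost everywhere. It remains to check $u=Fx$ is admissible and achieves the bound: the closed loop $\dot x = (A+BF)x$ is Hurwitz by the defining property of the stabilizing $X$, so $x(t)\to 0$ exponentially, the terminal term is zero, and $J = x(0)^\tp X x(0) < \infty$. Hence $u=Fx$ is optimal among causal state-feedback policies; in fact the argument never used that $u$ was a state feedback, so $F$ is optimal among all admissible controls.

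The step I expect to be the main obstacle is the rigorous treatment of the terminal term $\lim_{t\to\infty}x(t)^\tp X x(t)$ for competing controls. For the candidate feedback it is immediate, but to conclude that $F$ beats every competitor one must restrict attention to controls that make $J$ finite and then argue that the resulting state decays. Finiteness of $J$ only delivers $\int_0^\infty x^\tp Q x\,\mathrm{d}t < \infty$ and $\int_0^\infty u^\tp R u\,\mathrm{d}t < \infty$, which does not on its own bound $x$; promoting this to $x(t)\to 0$ is precisely where detectability of $(Q,A)$ enters, via an output-injection argument that writes $\dot x$ as a Hurwitz system driven by the $L^2$ signals $Q^{1/2}x$ and $u$. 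The only other non-routine ingredient is the Riccati existence/uniqueness theory invoked above; the remainder is the elementary calculation displayed in the middle paragraph.
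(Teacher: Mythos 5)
The paper states this proposition as classical background and gives no proof of its own (it is the standard LQR result, cf.\ the references to Hespanha and Zhou--Doyle--Glover). Your completion-of-squares argument is the standard textbook proof and is correct, including the two points that actually require care: the existence/uniqueness of the stabilizing $X$ via the Sylvester-equation subtraction, and the use of detectability of $(Q,A)$ to promote finiteness of $J$ to $x(t)\to 0$ for competing controls so that the terminal term vanishes.
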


We denote the optimal LQR gain from \cref{prop:lqr} using the notation $F 
	\defeq \ric(A,B,Q,R)$.
The optimal LQR controller is known to be inherently robust \cite[\S23]{hespanha2018linear}
\cite[\S14.4]{zdg}. In particular, if we define the loop gain $L(s) \defeq F(sI-A)^{-1}B$, then the \emph{Kalman inequality} holds:
\begin{equation}\label{eq:kalman_inequality}
\left(I - L(j\omega)\right)^* R \left(I - L(j\omega)\right) \succeq R	\quad\text{for all }\omega \in \R.
\end{equation}
In the single-input case, $L(j\omega)$ is a scalar and the Kalman inequality reduces to $|1-L(j\omega)|\geq 1$. This can be interpreted as the open-loop Nyquist plot of $-L$ (negative feedback) lying outside the disk centered at $(-1,0)$ with radius $1$. This implies that the LQR compensator has gain margin $\frac{1}{2}<k<\infty$ and phase margin $-60\text{\textdegree} < \phi < 60\text{\textdegree}$.

Alternatively, a sufficient condition for robust stability can be expressed in terms of the perturbation itself \cite{lehtomaki1981robustness}.

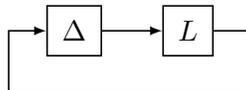
\begin{figure}[ht]
	\centering
	\begin{tikzpicture}[>=latex, node distance=8mm,semithick]
		\node[draw, inner sep=2mm] (D) {$\Delta$};
		\node[draw, inner sep=2mm, right= of D] (L) {$L$};
		\draw[->] (D) -- (L);
		\path (D.west) -- ++(-0.5,0) coordinate (t);
		\draw[->] (L.east) -- ++(0.5,0) -- ++(0,-0.8) -| (t) -- (D);
	\end{tikzpicture}
	\caption{Perturbed feedback interconnection. $L(s) \defeq F(sI-A)^{-1}B$ is the loop gain for a standard LQR feedback controller $u(t) = F x(t)$.}
	\label{fig:perturb}
\end{figure}

\begin{lem}\label{lem:lehtomaki}
Consider the setting of \cref{prop:lqr}, and let $L(s) \defeq F(sI-A)^{-1} B$ be the LQR-optimal loop gain. The interconnected system of \cref{fig:perturb} is well-posed and internally stable for all LTI systems $\Delta$ that satisfy 
\begin{equation}\label{eq:lehtomaki}
	\Delta(j\omega)^* R + R\Delta(j\omega) \succ R	\quad\text{for all }\omega \in \R.
\end{equation}
\end{lem}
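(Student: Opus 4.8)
The plan is to reduce the claim to two ingredients: \textbf{(a)} nonsingularity of the perturbed return difference $I-\Delta(j\omega)L(j\omega)$ for every $\omega\in\R\cup\{\infty\}$, which follows from the Kalman inequality~\eqref{eq:kalman_inequality} together with the hypothesis~\eqref{eq:lehtomaki}; and \textbf{(b)} a homotopy joining $\Delta$ to the trivial perturbation $\Delta\equiv I$, for which the interconnection is exactly the nominal LQR loop (internally stable by \cref{prop:lqr}), along which nonsingularity of the return difference is preserved, so that no closed-loop pole can cross the imaginary axis. Throughout I take $\Delta$ to be a proper, real-rational, stable LTI system; its stability guarantees that $\Delta L$ has no more open-right-half-plane poles than $L$ itself, and this is needed in step~\textbf{(b)}.

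For \textbf{(a)}, well-posedness is immediate: $L$ is strictly proper, so $L(\infty)=0$ and $\det\bigl(I-\Delta(\infty)L(\infty)\bigr)=\det I=1\ne0$. For the rest, fix $\omega$ and suppose $\det\bigl(I-\Delta(j\omega)L(j\omega)\bigr)=0$, so $v=\Delta(j\omega)L(j\omega)v$ for some $v\ne0$. Put $u\defeq L(j\omega)v$; then $v=\Delta(j\omega)u$ and $u\ne0$ (else $v=0$). Evaluating~\eqref{eq:kalman_inequality} at this $v$ and using $(I-L(j\omega))v=v-u$ gives the scalar inequality $(v-u)^{*}R(v-u)\ge v^{*}Rv$, i.e.\ $u^{*}Ru\ge 2\,\mathrm{Re}(v^{*}Ru)$; substituting $v=\Delta(j\omega)u$ and using $R=R^{*}$ rewrites the right-hand side as $u^{*}\bigl(\Delta(j\omega)^{*}R+R\Delta(j\omega)\bigr)u$, so $u^{*}Ru\ge u^{*}\bigl(\Delta(j\omega)^{*}R+R\Delta(j\omega)\bigr)u$, contradicting~\eqref{eq:lehtomaki}. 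This argument uses only the fixed $L$ from \cref{prop:lqr}, so it applies verbatim to any $\Delta$ satisfying~\eqref{eq:lehtomaki}; in particular to every convex combination $\Delta_\tau\defeq(1-\tau)I+\tau\Delta$, $\tau\in[0,1]$, because $\Delta_\tau(j\omega)^{*}R+R\Delta_\tau(j\omega)=2(1-\tau)R+\tau\bigl(\Delta(j\omega)^{*}R+R\Delta(j\omega)\bigr)\succ(2-\tau)R\succeq R$.

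For \textbf{(b)}, realize $L$ by $(A,B,F,0)$ and the stable $\Delta$ by $(A_\Delta,B_\Delta,C_\Delta,D_\Delta)$ with $A_\Delta$ Hurwitz, so $\Delta_\tau$ is realized by $(A_\Delta,B_\Delta,\tau C_\Delta,(1-\tau)I+\tau D_\Delta)$ and is still stable. Closing the loop of \cref{fig:perturb} produces a closed-loop state matrix $\mathcal A_{\mathrm{cl}}(\tau)$ that is affine in $\tau$; at $\tau=0$ it is block triangular with diagonal blocks $A+BF$ (Hurwitz by \cref{prop:lqr}) and $A_\Delta$ (Hurwitz), hence Hurwitz. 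If $\mathcal A_{\mathrm{cl}}(\tau)$ were not Hurwitz for some $\tau\in(0,1]$, continuity of eigenvalues would give $\tau^{*}\in(0,1]$ and $\omega^{*}\in\R$ with $j\omega^{*}\in\mathrm{spec}\,\mathcal A_{\mathrm{cl}}(\tau^{*})$; the identity $\det\bigl(sI-\mathcal A_{\mathrm{cl}}(\tau)\bigr)=\det(sI-A)\,\det(sI-A_\Delta)\,\det\bigl(I-\Delta_\tau(s)L(s)\bigr)$ — valid off the spectra of $A$ and $A_\Delta$, with $A_\Delta$ Hurwitz and $A$ assumed to have no imaginary-axis eigenvalues (else indent the contour in the standard way) — then forces $\det\bigl(I-\Delta_{\tau^{*}}(j\omega^{*})L(j\omega^{*})\bigr)=0$, contradicting step~\textbf{(a)}. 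Hence $\mathcal A_{\mathrm{cl}}(1)$ is Hurwitz, i.e.\ the interconnection is internally stable.

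The quadratic-form manipulation in~\textbf{(a)} is short and routine; the real obstacle is~\textbf{(b)}, the passage from ``no imaginary-axis zeros of the return difference'' to genuine internal stability. This is the one place where stability of $\Delta$ enters — to pin down the open-loop pole count — and where one must run the eigenvalue-continuity/homotopy argument above, or, equivalently, invoke the multivariable Nyquist criterion and check that the Nyquist plot of $\det\bigl(I-\Delta(s)L(s)\bigr)$ has the same winding number about the origin as that of $\det(I-L(s))$, the latter certifying zero right-half-plane closed-loop poles via \cref{prop:lqr}.
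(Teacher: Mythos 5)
Your argument is correct, but it takes a genuinely different route from the paper. The paper inverts the Kalman inequality via the matrix inversion lemma to get $\norm{R^{1/2}(I+H)R^{-1/2}}_\infty\le 1$ for the closed-loop map $H(s)=F(sI-A-BF)^{-1}B$, loop-shifts \cref{fig:perturb} into \cref{fig:perturb2}, and invokes the small gain theorem against the shifted uncertainty $R^{1/2}(I-\Delta^{-1})R^{-1/2}$. You instead work directly with the return difference: the quadratic-form computation in your step (a) extracts from \eqref{eq:kalman_inequality} and \eqref{eq:lehtomaki} that $I-\Delta_\tau(j\omega)L(j\omega)$ is nonsingular along the whole convex homotopy $\Delta_\tau=(1-\tau)I+\tau\Delta$, and you then run an eigenvalue-continuity (multivariable Nyquist) argument from the nominal loop at $\tau=0$. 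This is essentially the classical Lehtom\"aki-style proof; it is more elementary (no small gain theorem, no loop shifting, no inversion of $\Delta$) and it makes the stability bookkeeping explicit, at the price of requiring $\Delta$ stable rather than $\Delta^{-1}$ stable --- note that the paper's route implicitly needs the shifted block $I-\Delta^{-1}$ to be a stable system, so the two proofs actually quantify over slightly different admissible classes of $\Delta$, a point the lemma statement leaves vague either way. What the paper's route buys, and yours does not, is the exact machinery reused in \cref{thm:main}: the $\Hinf$ reformulation of the Kalman inequality and the loop-shifted interconnection are precisely what get upgraded to the structured singular value bound of \cref{lem:mu}, whereas a return-difference/homotopy argument does not combine as cleanly with the $N$ separate Kalman inequalities \eqref{eq:kalman_inequality_decent}. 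One technical point you gloss over: the determinant identity $\det(sI-\mathcal{A}_{\mathrm{cl}}(\tau))=\det(sI-A)\det(sI-A_\Delta)\det(I-\Delta_\tau(s)L(s))$ degenerates at imaginary-axis eigenvalues of $A$ (poles of $L$); your parenthetical ``indent the contour'' is the standard fix, but a complete writeup should either carry it out or argue separately that such a frequency cannot be a closed-loop eigenvalue.
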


\begin{proof}
Invert \eqref{eq:kalman_inequality} and apply the matrix inversion lemma, which yields
$(I+H(j\omega))^*R(I+H(j\omega)) \preceq R$,
where we defined the closed-loop map $H(s) \defeq F(sI-A-BF)^{-1}B$. This is equivalent to $\norm{R^{1/2}(I+H)R^{-1/2}}_\infty \leq 1$. Then, perform a loop-shifting transformation to \cref{fig:perturb} to obtain \cref{fig:perturb2}. Apply the small gain theorem \cite[Thm.~9.1]{zdg} to conclude that the interconnection is well-posed and internally stable for all LTI systems $\Delta$ satisfying $\norm{R^{1/2}(I-\Delta^{-1})R^{-1/2}}_\infty<1$, which is equivalent to \eqref{eq:lehtomaki}.
\end{proof}

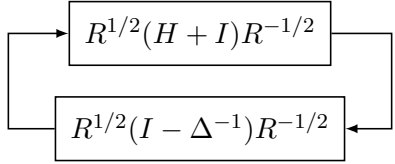
\begin{figure}[ht]
	\centering
	\begin{tikzpicture}[>=latex, node distance=4mm,semithick]
		\node[draw, inner sep=2mm] (L) {$R^{1/2}(H+I)R^{-1/2}$};
		\node[draw, inner sep=2mm, below= of L] (D) {$R^{1/2}(I-\Delta^{-1})R^{-1/2}$};
		\draw[->] (L.east) -- ++(0.8,0) |- (D);
		\draw[<-] (L.west) -- ++(-0.8,0) |- (D);
	\end{tikzpicture}
	\caption{Transformation of \cref{fig:perturb}. $H(s) \defeq F(sI-A-BF)^{-1}B$ is the closed-loop map for a standard LQR feedback controller.}
	\label{fig:perturb2}
\end{figure}

\cref{lem:lehtomaki} allows us to specialize the previous gain and phase margin results derived from the Kalman inequality to the case where each input channel is separately perturbed.

\begin{cor}\label{cor:centralized_vector_robustness}
Consider the setting of \cref{lem:lehtomaki}. Partition the input $u(t)$ into subvectors of dimension $m_1+\dots+m_N = m$. If we assume $R$ and $\Delta$ are block-diagonal and partitioned conformally to the partition of $u(t)$, i.e.,
\[
R = \bmat{R_1 & & 0\\& \ddots & \\0 & & R_N}
\quad\text{and}\quad
\Delta = \bmat{\Delta_1 & & 0\\& \ddots & \\0 & & \Delta_N},
\]
then the interconnected system of \cref{fig:perturb} is well-posed and internally stable for all independent LTI perturbations of the blocks of $u(t)$ satisfying 
$\Delta_i(j\omega)^*R_i + R_i \Delta_i(j\omega) \succ R_i$ for $i=1,\dots,N$ and for all $\omega\in\R$.
In particular, each input block independently has gain margin $\frac{1}{2}<k_i<\infty$ and phase margin $-60\text{\textdegree} < \phi_i < 60\text{\textdegree}$.
\end{cor}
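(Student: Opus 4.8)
The plan is to derive this as an immediate consequence of \cref{lem:lehtomaki}, the only real content being that positive-definiteness of a block-diagonal Hermitian matrix decouples across its blocks. First I would note that, since $R$ and $\Delta$ are block-diagonal and conformally partitioned to $u(t)=(u_1(t),\dots,u_N(t))$, the matrix $\Delta(j\omega)^* R + R\,\Delta(j\omega)$ appearing in \eqref{eq:lehtomaki} is itself block-diagonal with $i$-th diagonal block $\Delta_i(j\omega)^* R_i + R_i\,\Delta_i(j\omega)$, while $R$ is block-diagonal with blocks $R_i$. A block-diagonal Hermitian matrix satisfies $\mathrm{blkdiag}(M_1,\dots,M_N)\succ 0 \iff M_i\succ 0$ for all $i$; hence the single hypothesis $\Delta(j\omega)^* R + R\,\Delta(j\omega)\succ R$ of \cref{lem:lehtomaki} is equivalent to the $N$ separate conditions $\Delta_i(j\omega)^* R_i + R_i\,\Delta_i(j\omega)\succ R_i$, $i=1,\dots,N$, each required for all $\omega\in\R$. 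Because these conditions involve disjoint blocks, they may be imposed independently, and \cref{lem:lehtomaki} then yields well-posedness and internal stability of \cref{fig:perturb} under any such family of perturbations.

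For the per-channel gain and phase margins I would specialize to perturbations acting on a single block: fix an index $i$, take $\Delta_j=I$ for every $j\neq i$ (which trivially satisfies its block condition, since $I^* R_j + R_j I = 2R_j \succ R_j$ as $R_j\succ 0$), and let $\Delta_i$ be a static perturbation. Setting $\Delta_i=k_i I$ with $k_i>0$ real gives $\Delta_i^* R_i + R_i\Delta_i = 2k_i R_i$, so the block condition $2k_i R_i\succ R_i$ is equivalent to $(2k_i-1)R_i\succ 0$, i.e. $k_i>\frac12$; since no upper bound arises, this is the gain margin $\frac12<k_i<\infty$. Setting $\Delta_i=e^{j\phi_i}I$ gives $\Delta_i^* R_i + R_i\Delta_i = 2\cos(\phi_i)R_i$, so the condition becomes $\cos(\phi_i)>\frac12$, i.e. $-60\text{\textdegree}<\phi_i<60\text{\textdegree}$.

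I do not anticipate a genuine obstacle here: the argument is essentially the block-diagonal positivity remark above, combined with substitution of scalar perturbations into \eqref{eq:lehtomaki}. The one point requiring a word of care is that the unperturbed blocks ($\Delta_j=I$) must be verified to satisfy their block inequality so that \cref{lem:lehtomaki} still applies to the single-channel perturbations used for the margin computations; this is immediate from $R_j\succ 0$.
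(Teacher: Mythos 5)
Your proof is correct and matches the paper's (implicit) reasoning: the paper states this as an unproved corollary of \cref{lem:lehtomaki}, and your argument---block-diagonal positive definiteness decouples into per-block conditions, then substitute $\Delta_i = k_i I$ and $\Delta_i = e^{j\phi_i}I$ to read off the margins---is exactly the intended derivation.
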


In \cref{cor:centralized_vector_robustness}, the assumption that $R$ is block-diagonal is necessary. It is possible to construct systems where a non-diagonal $R$ leads to closed loops that be destabilized by arbitrarily small perturbations in a single channel \cite[Ex.~3.1]{lehtomaki1981robustness}.

Similar robustness results to \cref{lem:lehtomaki} have been derived for discrete time \cite{shaked1986guaranteed} and for the case with cross-product cost terms \cite{chung1994stability}, though these cases generally have weaker robustness guarantees.
There are also negative results; when $R$ is full, the independent perturbation result of \cref{cor:centralized_vector_robustness} no longer holds \cite[Ex.~3.1]{lehtomaki1981robustness}. Finally, there are no guaranteed stability margins for LQG compensators \cite{doyle1978guaranteed}.

\section{Decentralized LQR control}\label{sec:decentralized_lqr}

We consider the problem setting studied in \cite{shah2013cal,swigart_spectral}, which is an LQR problem structured according to a directed acyclic graph (DAG).
Specifically, we assume the setting in \cref{prop:lqr}, but we partition the state as $x = \bmat{x_1^\tp & \cdots & x_N^\tp}^\tp$ and similarly for the input $u$. We also partition $A$ and $B$ as $N\times N$ block matrices conforming to the partitions of $x$ and $u$.

There is an underlying DAG on the nodes $1,\dots,N$, which are assumed to be ordered according to the partial ordering of the DAG. The matrices $A$ and $B$ have a block-sparsity pattern that conforms to the adjacency matrix of the transitive closure of the DAG. Consider for example the 4-node DAG in \cref{fig:dag_example}.

\begin{figure}[ht]
	\centering
	\begin{minipage}{0.4\linewidth}
		\centering
		\begin{tikzpicture}[semithick]
			\tikzstyle{vertex} = [draw,circle,inner sep=1mm,font=\small]
			\tikzstyle{arr} = [>=latex,->]
			\def\y{0.6}
			\def\x{1.0}
			\node[vertex] (1) at (0,0) {$1$};
			\node[vertex] (2) at (-\x,-\y) {$2$};
			\node[vertex] (3) at (\x,-\y) {$3$};
			\node[vertex] (4) at (0,-2*\y) {$4$};
			\draw[arr] (1) -- (2);
			\draw[arr] (1) -- (3);
			\draw[arr] (2) -- (4);
			\draw[arr] (3) -- (4);
		\end{tikzpicture}
	\end{minipage}%
	\begin{minipage}{0.4\linewidth}
		\[
		S = \bmat{1 & 0 & 0 & 0\\
			1 & 1 & 0 & 0 \\
			1 & 0 & 1 & 0 \\
			1 & 1 & 1 & 1}
		\]
	\end{minipage}
	\caption{Example of a 4-node directed acyclic graph (DAG), the adjacency matrix of its transitive closure is $S$, shown on the right.}
	\label{fig:dag_example}
\end{figure}
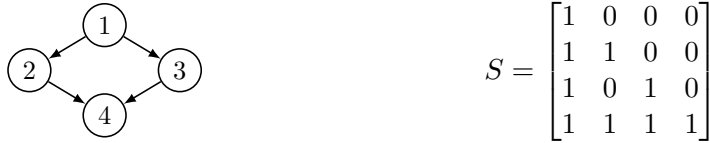

\noindent The associated dynamical system would have the structure
\begin{small}
\[
\bmat{\dot{x}_1\\\dot{x}_2\\\dot{x}_3\\\dot{x}_4}
=
\addtolength{\arraycolsep}{-1.2mm}\bmat{A_{11} & 0 & 0 & 0 \\ A_{21} & A_{22} & 0 & 0 \\ A_{31} & 0 & A_{33} & 0 \\ A_{41} & A_{42} & A_{43} & A_{44}}\!\!
\bmat{x_1 \\ x_2 \\ x_3 \\ x_4}
+
\bmat{B_{11} & 0 & 0 & 0 \\ B_{21} & B_{22} & 0 & 0 \\ B_{31} & 0 & B_{33} & 0 \\ B_{41} & B_{42} & B_{43} & B_{44}}\!\!
\bmat{u_1 \\ u_2 \\ u_3 \\ u_4}	
\]\end{small}%
There are no assumptions on the cost matrices, so all states and inputs may be coupled through $Q$ and $R$, respectively.

\begin{defn}
The ancestors of node $i$, denoted $\anc{i}$, is the set of all nodes $j$ for which there exists a directed path from $j$ to $i$, including node $i$. Similarly, the descendants of node $i$, denoted $\des{i}$, is the set of all nodes $j$ for which there exists a directed path from $i$ to $j$, including $i$. We also use these sets as a matrix subscripts to indicate the submatrix formed by selecting the corresponding block rows and columns.
\end{defn}

For the example of \cref{fig:dag_example}, we have $\des{2} = \{2,4\}$ and $\anc{3} = \{1,3\}$, which defines the block submatrices
\[
A_{\des{2}} = \bmat{A_{22} & 0 \\ A_{24} & A_{44}}
\quad\text{and}\quad
B_{\anc{3}} = \bmat{B_{11} & 0 \\ B_{31} & B_{33}}.
\]

What makes the problem \emph{decentralized} is that each $u_i$ only has access to the past history of the ancestors of node $i$. For the example of \cref{fig:dag_example}, this means the $u_i$ take the form
\begin{align*}
u_1 &= \mathcal{K}_1(x_1), &
u_2 &= \mathcal{K}_2(x_1,x_2), \\
u_3 &= \mathcal{K}_3(x_1,x_3), &
u_4 &= \mathcal{K}_4(x_1,x_2,x_3,x_4),
\end{align*}
where the $\mathcal{K}_i$ are causal maps.
In general, decentralized problems with LQG assumptions need not have linear optimal controllers \cite{witsenhausen1968counterexample}. However, when the plant and controller are structured according to a DAG as above, the optimal controller is linear \cite{ho1972team} and finding the optimal linear controller may be cast as a convex optimization problem \cite{rotkowitz2002decentralized}.

Explicit closed-form solutions have been obtained for this decentralized LQR problem using a state-space approach~\cite{kim2015explicit,swigart_spectral} and poset-based approach~\cite{shah2013cal}. Similar explicit solutions exist for LQG (output-feedback) versions of this problem \cite{lessard2015optimal,kim2015explicit,tanaka2014optimal,kashyap2019explicit} and also with time delays \cite{kashyap2020agent,lamperski2015optimal}.

The optimal controller for the decentralized LQR problem described above has the following structure \cite{swigart_spectral}.

\begin{prop}\label{prop:dec_lqr}
	Consider the decentralized LQR problem. Suppose $(A_i,B_i)$ is stabilizable for $i=1,\dots,N$ and $(Q,A)$ is detectable.
	Let $F_i \defeq \ric(A_{\des{i}},B_{\des{i}},Q_{\des{i}},R_{\des{i}})$.
	The optimal decentralized LQR controller has closed-loop dynamics and associated optimal policy given by
	\begin{align*}
		\begin{aligned}
		\dot{\xi}_i &= (A_{\des{i}} + B_{\des{i}} F_i)\xi_i\\
		u_i &= \sum_{j \in \anc{i}} I_{i,\des{j}}F_j\xi_j
		\end{aligned} & & \text{for }i=1,\dots,N
	\end{align*}
	where $I_{i,\des{j}}$ is the block-row of the identity matrix $I_{\des{j}}$ associated with node $i$.

	If we include zero-mean process noise in the plant dynamics that is independent between the different nodes of the DAG, then
	$\xi_i = \ee\bigl( x_{\des{i}} \suchthat x_{\anc{i}} \bigr) - \ee\bigl( x_{\des{i}} \suchthat x_{\anc{i}\setminus \{i\}} \bigr)$, so $\xi_i$ is an estimation correction in updating the estimate of the descendants once the current node $i$ is included.
\end{prop}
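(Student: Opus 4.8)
The plan is to reconstruct, in outline, the closed-loop parametrization and spectral-factorization argument of \cite{swigart_spectral,shah2013cal}. Since the information structure is poset-causal, it is quadratically invariant \cite{rotkowitz2005characterization}, so the optimal decentralized controller is linear \cite{ho1972team} and, expressed in terms of the achievable closed-loop responses, the problem is a convex $H_2$ program over an affine subspace \cite{rotkowitz2002decentralized}. I would therefore restrict attention to linear policies and work with the closed-loop map from a unit-covariance exogenous input (random initial condition, or process noise) to the weighted output $\bmat{Q^{1/2}x\\ R^{1/2}u}$, so that the decentralized LQR cost \eqref{eq:cost} becomes the squared $H_2$ norm of this map.

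The conceptual core is a state decomposition along the DAG ordering. Set $\xi_i \defeq \ee(x_{\des{i}}\suchthat x_{\anc{i}}) - \ee(x_{\des{i}}\suchthat x_{\anc{i}\setminus\{i\}})$, the increment in the estimate of the descendant states of node $i$ when the information of node $i$ is revealed. Using block-triangularity of $(A,B)$ relative to the transitive closure of the DAG together with the assumed independence of the per-node process noise, I would establish that (i) the processes $\xi_i$ are mutually uncorrelated and zero-mean; (ii) each $\xi_i$ is the state of the subsystem $(A_{\des{i}},B_{\des{i}})$ driven only by node $i$'s ``new'' control component and its own noise; and (iii) the global state and input reconstruct block-wise as $x_{\des{i}} = \sum_{j\in\anc{i}}I_{i,\des{j}}\,\xi_{j}$ (and similarly for $u$ via the corresponding control components). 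Substituting (iii) into \eqref{eq:cost} and invoking (i) to annihilate all cross terms, the $H_2$ cost block-diagonalizes into $N$ decoupled \emph{centralized} LQR problems, the $i$-th being an instance of \cref{prop:lqr} with data $(A_{\des{i}},B_{\des{i}},Q_{\des{i}},R_{\des{i}})$.

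Given this decoupling, \cref{prop:lqr} identifies the optimal gain of the $i$-th subproblem as $F_i = \ric(A_{\des{i}},B_{\des{i}},Q_{\des{i}},R_{\des{i}})$ with closed-loop matrix $A_{\des{i}}+B_{\des{i}}F_i$; reassembling through (iii) recovers precisely $\dot\xi_i = (A_{\des{i}}+B_{\des{i}}F_i)\xi_i$ and $u_i = \sum_{j\in\anc{i}}I_{i,\des{j}}F_j\xi_j$, which is admissible because node $i$ can compute $\xi_j$ for every $j\in\anc{i}$ from the information it is given. Stabilizability of each $(A_i,B_i)$ propagates, by the standard cascade argument for block-triangular systems, to stabilizability of every $(A_{\des{i}},B_{\des{i}})$; with detectability of $(Q,A)$ this makes each subsidiary Riccati equation have a unique stabilizing solution and renders the interconnection internally stable, so every subproblem --- hence the original problem --- is well posed and the candidate attains the decoupled lower bound. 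In the noisy case the estimation interpretation is then immediate, since $\xi_i$ is exactly the quantity used in the decomposition; one also checks by differentiating $\ee(x_{\des{i}}\suchthat x_{\anc{i}})$ and $\ee(x_{\des{i}}\suchthat x_{\anc{i}\setminus\{i\}})$ that under the optimal policy this increment obeys the stated autonomous recursion.

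The main obstacle is step (i)--(iii): the decomposition is not a purely algebraic consequence of the sparsity pattern, because a node's dynamics can depend on states outside its own descendant set (only the ancestor set is ``upstream''). Autonomy of the $\xi_i$ dynamics and vanishing of the cost cross terms instead rely on the conditional-independence structure that the DAG together with independent process noise imposes on the trajectory --- equivalently, on the existence of the spectral factorization of the $H_2$ cost along the poset. Making that precise is the technical heart of the argument, and is where the hypotheses (block-triangular $A,B$; per-node independent noise) are genuinely used.
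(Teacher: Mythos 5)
First, a point of reference: the paper does not prove \cref{prop:dec_lqr} at all --- it is imported from \cite{swigart_spectral} (see also \cite{shah2013cal}) --- so there is no in-paper argument to compare yours against. Judged against the proofs in those references, your outline follows the right strategy: it is essentially the state-space/poset decomposition of Shah--Parrilo and Swigart--Lall, in which the state is written as a sum of orthogonal ``prediction increments'' $\xi_i$, the $H_2$ cost splits into $N$ decoupled centralized LQR problems with data $(A_{\des{i}},B_{\des{i}},Q_{\des{i}},R_{\des{i}})$, and each is solved via \cref{prop:lqr}. The appeals to partial nestedness for linearity and to quadratic invariance for convexity are the standard and correct preliminaries.

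As it stands, however, the proposal is a roadmap rather than a proof, and the missing steps are exactly the load-bearing ones. (1) The decomposition identity should read $x_i=\sum_{j\in\anc{i}}I_{i,\des{j}}\xi_j$; your $x_{\des{i}}$ on the left is dimensionally inconsistent with the right-hand side, and for a general DAG (not a chain) the telescoping of conditional expectations over $\anc{i}$ must itself be justified --- this is precisely where per-node noise independence and block-triangularity of $(A,B)$ enter. (2) You decouple the cost but not the constraint set: one must also show that, without loss of optimality, $u_i=\sum_{j\in\anc{i}}I_{i,\des{j}}\nu_j$ with each $\nu_j$ a causal function of $\xi_j$ alone, and that letting $\nu_j$ depend on $\xi_k$ for $k\neq j$ cannot lower the quadratic cost because the increments are orthogonal; only then does each subproblem become an \emph{unconstrained} centralized LQR. (3) The autonomous recursion $\dot\xi_i=(A_{\des{i}}+B_{\des{i}}F_i)\xi_i$ under the optimal policy is asserted, not derived. (4) A hypothesis-level subtlety you inherit from the statement: detectability of $(Q,A)$ does not automatically pass to $(Q_{\des{i}},A_{\des{i}})$ (an unstable mode of $A_{ii}$ invisible to $Q_{\des{i}}$ can still be visible to $Q$ through off-diagonal blocks), yet that is what each sub-Riccati equation needs. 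You candidly flag (1) and (3) as the technical heart, which is accurate --- but it means the heart is still missing; what you have is a faithful outline of the argument in \cite{swigart_spectral,shah2013cal}, not a self-contained proof.
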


The optimal decentralized controller from \cref{prop:dec_lqr} is linear, but unlike the classical centralized case in \cref{prop:lqr}, it is also \emph{dynamic}. The decentralized LQR controller bears a resemblance to the optimal LQG controller because its states are estimates of plant states. The main difference is that the strict descendants of node $i$ are not observable, so rather than using an observer such as a Kalman filter, the state estimates are formed via \emph{prediction} \cite[\S IV.D]{shah2011optimal}.

\section{Main Results}\label{sec:main}

For the optimal decentralized LQR controller described in \cref{prop:dec_lqr}, there is no large Kalman inequality of the form \eqref{eq:kalman_inequality}. Instead, we have $N$ separate Kalman inequalities
\begin{equation}\label{eq:kalman_inequality_decent}
	\left(I - L_i(j\omega)\right)^* R_{\des{i}} \left(I - L_i(j\omega)\right) \succeq R_{\des{i}}\qquad\text{for all }\omega \in \R,
\end{equation}
corresponding to the $N$ separate centralized LQR sub-problems that make up the optimal decentralized controller.

Consequently, there is no apparent way to leverage the small gain theorem as in the proof of \cref{lem:lehtomaki}. Instead, we show that if we assume $B$ and $R$ are block-diagonal, we can prove a result similar to \cref{cor:centralized_vector_robustness} for block-diagonal perturbations. 

\begin{thm}\label{thm:main}
	Consider the decentralized LQR problem and its optimal controller, described in \cref{sec:decentralized_lqr,prop:dec_lqr}, respectively, and let $L_\textup{dec}$ be the optimal loop gain. 
	
	Further suppose that $R$ and $B$ are block-diagonal with block sizes corresponding to the partitions of $x(t)$ and $u(t)$.
	The interconnected system of \cref{fig:perturb_dec} is well-posed and internally stable for all independent LTI perturbations of the blocks of $u(t)$ satisfying the following for all $i=1,\dots,N$. 
	\begin{equation}\label{eq:new_lehto}
		\Delta_i(j\omega)^* R_i + R_i\Delta_i(j\omega) \succ R_i\quad\text{for all }\omega \in \R.
	\end{equation}
\end{thm}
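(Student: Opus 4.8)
The plan is to reduce the theorem to $N$ independent applications of \cref{lem:lehtomaki}. Write $L_\textup{dec}(s) = \mathcal{K}_\textup{dec}(s)(sI-A)^{-1}B$, where $\mathcal{K}_\textup{dec}$ is the transfer matrix of the controller of \cref{prop:dec_lqr}. The information constraint forces the $(i,j)$ block of $\mathcal{K}_\textup{dec}$ to vanish unless $j\in\anc{i}$; and since $A$ has the block-sparsity of $S$ while $B$ is block-diagonal, $(sI-A)^{-1}B$ has the block-sparsity of $S$, with $(i,i)$ block $(sI-A_{ii})^{-1}B_{ii}$. Multiplying, $L_\textup{dec}$ has the block-sparsity of $S$ --- it is block lower-triangular in the DAG ordering --- and its diagonal blocks are $(L_\textup{dec})_{ii}(s) = (\mathcal{K}_\textup{dec})_{ii}(s)(sI-A_{ii})^{-1}B_{ii}$, since $k\in\anc{i}$ and $i\in\anc{k}$ together force $k=i$.

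Next, recall from \cref{prop:dec_lqr} that the nominal closed loop, in the controller coordinates $\xi_1,\dots,\xi_N$, is block-diagonal with stable blocks $A_{\des{i}}+B_{\des{i}}F_i$. Inserting the block-diagonal perturbation $\Delta=\mathrm{diag}(\Delta_1,\dots,\Delta_N)$ reintroduces only block lower-triangular coupling (through the triangular $L_\textup{dec}$), so the perturbed interconnection of \cref{fig:perturb_dec} admits a realization whose state matrix is block lower-triangular with diagonal blocks realizing the $N$ ``diagonal sub-loops'' --- the interconnection of $\Delta_i$ with $(L_\textup{dec})_{ii}$ in the configuration of \cref{fig:perturb}. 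Hence the perturbed interconnection is well-posed and internally stable if and only if each diagonal sub-loop is, and it suffices to establish the latter under \eqref{eq:new_lehto}.

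The crux is that $(L_\textup{dec})_{ii}$ is itself a centralized LQR loop gain weighted by $R_i$. Unwinding the observer/prediction structure of \cref{prop:dec_lqr} --- node $i$ reconstructs $x_i$ exactly and propagates its estimate of the strict descendants of $i$ through the internal model with gain $F_i$ --- one obtains a realization $(L_\textup{dec})_{ii}(s)=\mathcal{C}_i(sI-\mathcal{A}_i)^{-1}\mathcal{B}_i$, where $\mathcal{A}_i=A_{\des{i}}+\hat B_iF_i$ is $A_{\des{i}}$ with the control channels of the strict descendants of $i$ closed under $F_i$ (so $\hat B_i$ is $B_{\des{i}}$ with the block of node $i$'s own input zeroed), $\mathcal{B}_i$ is the block of $B_{\des{i}}$ carrying node $i$'s input, and $\mathcal{C}_i=I_{i,\des{i}}F_i=-R_i^{-1}\mathcal{B}_i^\tp X_i$, with $X_i\succeq0$ the stabilizing Riccati solution defining $F_i=\ric(A_{\des{i}},B_{\des{i}},Q_{\des{i}},R_{\des{i}})$. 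The key identity --- which uses that $R$, hence $R_{\des{i}}$, is block-diagonal so that the Riccati cross-terms collapse --- is that this same $X_i$ solves the algebraic Riccati equation of the centralized problem $(\mathcal{A}_i,\mathcal{B}_i,\tilde Q_i,R_i)$ with $\tilde Q_i:=Q_{\des{i}}+X_i\hat B_i\hat R_i^{-1}\hat B_i^\tp X_i\succeq Q_{\des{i}}\succeq0$, where $\hat R_i$ is the complementary block of $R$; moreover $\mathcal{A}_i+\mathcal{B}_i\mathcal{C}_i=A_{\des{i}}+B_{\des{i}}F_i$ is stable, so $X_i$ is the stabilizing solution and the hypotheses of \cref{prop:lqr} hold for $(\mathcal{A}_i,\mathcal{B}_i,\tilde Q_i,R_i)$. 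Thus $(L_\textup{dec})_{ii}$ is an LQR-optimal loop gain, and \cref{lem:lehtomaki} yields well-posedness and internal stability of the $i$-th sub-loop for every $\Delta_i$ satisfying $\Delta_i(j\omega)^*R_i+R_i\Delta_i(j\omega)\succ R_i$, i.e.\ \eqref{eq:new_lehto}. Combining the $N$ sub-loops as in the previous paragraph proves the theorem.

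The step I expect to be the main obstacle is this last identification: extracting the state-space realization of $(L_\textup{dec})_{ii}$ from \cref{prop:dec_lqr} (the essential point being that a perturbation on channel $i$ feeds back to $u_i$ only through $\xi_i$, every other estimate being driven solely by states of strict ancestors of $i$), and then the Riccati algebra showing that the $\des{i}$-subproblem solution $X_i$ does double duty for the augmented single-channel problem $(\mathcal{A}_i,\mathcal{B}_i,\tilde Q_i,R_i)$. Block-diagonality of $B$ is what makes $L_\textup{dec}$ triangular and decouples the $\Delta_i$, and block-diagonality of $R$ is what makes the Riccati cross-terms vanish so that $\tilde Q_i\succeq0$; one expects both hypotheses to be essential, consistent with the counterexample of \cref{sec:discussion}.
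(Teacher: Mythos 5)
Your proposal is correct in its essentials but takes a genuinely different route from the paper. The paper loop-shifts \cref{fig:perturb_dec} into \cref{fig:perturb_dec2}, bounds the \emph{structured singular value} of the weighted closed-loop map by $\max_i\|M_{ii}\|$ using the block-lower-triangularity of the closed loop (the identity $\det(I-M\Delta)=\prod_i\det(I-M_{ii}\Delta_i)$), deduces $\|M_{ii}\|_\infty\le 1$ from the $i$-th Kalman inequality \eqref{eq:kalman_inequality_decent} via a submatrix-norm bound (this is where block-diagonality of $R$ enters), and then invokes the structured small gain theorem to get well-posedness and internal stability in one shot. You exploit the same triangularity, but instead of the $\mu$ machinery you re-identify each diagonal block $(L_\textup{dec})_{ii}$ as a bona fide centralized LQR loop gain for the modified problem $(\mathcal{A}_i,\mathcal{B}_i,\tilde Q_i,R_i)$ and apply \cref{lem:lehtomaki} channel by channel. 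I checked your Riccati identity: with $R_{\des{i}}$ block-diagonal one indeed gets $\mathcal{A}_i^\tp X_i+X_i\mathcal{A}_i+\tilde Q_i-X_i\mathcal{B}_iR_i^{-1}\mathcal{B}_i^\tp X_i=0$ with $\tilde Q_i=Q_{\des{i}}+X_i\hat B_i\hat R_i^{-1}\hat B_i^\tp X_i\succeq 0$, the new gain is $-R_i^{-1}\mathcal{B}_i^\tp X_i=I_{i,\des{i}}F_i$, and $\mathcal{A}_i+\mathcal{B}_i\mathcal{C}_i=A_{\des{i}}+B_{\des{i}}F_i$ is Hurwitz, so $X_i$ is stabilizing; consequently $I+(H_\textup{dec})_{ii}=(I-(L_\textup{dec})_{ii})^{-1}$ matches the paper's $M_{ii}$ computation in \eqref{Miiineq}. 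Your argument buys a stronger structural statement (each diagonal sub-loop literally \emph{is} an LQR loop) at the cost of more Riccati algebra; the paper's submatrix bound gets the same norm inequality in two lines.

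The one step you should not leave as an assertion is the reduction ``the perturbed interconnection is internally stable if and only if each diagonal sub-loop is.'' This requires exhibiting a block-lower-triangular state-space realization of the perturbed closed loop whose group-$i$ diagonal block is a realization of the feedback of $\Delta_i$ with $(L_\textup{dec})_{ii}$, and that realization is necessarily \emph{non-minimal} (the controller state $\xi_i$ has dimension $n_{\des{i}}$, not $n_i$, and your $(\mathcal{A}_i,\mathcal{B}_i,\mathcal{C}_i)$ realization has yet another dimension). You must therefore separately argue that the hidden (uncontrollable/unobservable from channel $i$) modes of each diagonal block are stable --- e.g., by noting they are invariant under the feedback $\Delta_i$ and already appear in the nominal, stable closed loop. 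This is fixable but is precisely the bookkeeping the paper's structured-small-gain route avoids. A minor additional point: to invoke \cref{prop:lqr}/\cref{lem:lehtomaki} verbatim you would also want detectability of $(\tilde Q_i,\mathcal{A}_i)$; it is cleaner to note that the return-difference identity, hence the Kalman inequality, holds for any ARE solution with $\tilde Q_i\succeq 0$, and that stability of the nominal sub-loop is already known.
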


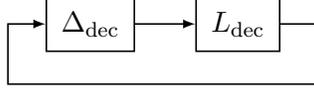
\begin{figure}[ht]
	\centering
	\begin{tikzpicture}[>=latex, node distance=8mm,semithick]
		\node[draw, inner sep=2mm] (D) {$\Delta_\textup{dec}$};
		\node[draw, inner sep=2mm, right= of D] (L) {$L_\textup{dec}$};
		\draw[->] (D) -- (L);
		\path (D.west) -- ++(-0.5,0) coordinate (t);
		\draw[->] (L.east) -- ++(0.5,0) -- ++(0,-0.8) -| (t) -- (D);
	\end{tikzpicture}
	\caption{Perturbed feedback interconnection. $L_\textup{dec}$ is the loop gain for the optimal decentralized LQR feedback controller described in \cref{prop:dec_lqr} and $\Delta_\textup{dec} = \diag\{\Delta_i\}$ is a block-diagonal LTI perturbation.}
	\label{fig:perturb_dec}
\end{figure}

\begin{rem}
	\cref{thm:main} looks similar to \cref{cor:centralized_vector_robustness}, but $L_\textup{dec}$ is now the more complicated loop gain for the optimal \emph{decentralized} LQR controller. Unlike \cref{cor:centralized_vector_robustness}, \cref{thm:main} makes the additional assumptions that $B$ and $R$ are block diagonal. In \cref{sec:discussion}, we show that these assumptions are necessary, but we argue that they are not restrictive in many cases of practical interest.
\end{rem}

\begin{proof}
We take an approach similar to the proof of \cref{lem:lehtomaki}, except we use a more general version of the small gain theorem for structured uncertainty, and additional steps are required to combine the $N$ separate Kalman inequalities into something we can use. Start by rewriting the closed-loop map of the optimal decentralized LQR controller from \cref{prop:dec_lqr} as:
\[
H_\textup{dec} = \1^\tp \bar{F} (sI - \bar A - \bar B \bar F)^{-1} \hat B
\]
where we defined:
\begin{align*}
\1^\tp & \defeq \bmat{I_{1,\des{1}} & \cdots & I_{N,\des{N}}} \\
\bar A & \defeq \diag\{ A_{\des{i}} \} \\
\bar F & \defeq \diag\{ F_i \} \\
\bar B & \defeq \diag\{ B_{\des{i}} \} \\
\hat B & \defeq \bmat{ e_1e_1^\tp B_{\des{1}} \\ \vdots \\ e_Ne_N^\tp B_{\des{N}} },
\end{align*}
where $e_i$ is the $i$-th column of the identity matrix of size $n$. Since $B$ is block-diagonal, we have $\hat B = \diag\{ B_{\des{i}}e_i \}$. So we can rewrite the closed-loop map as
$
H_\textup{dec} = \1^\tp \bar H E
$, where we defined:
\begin{gather*}
\bar H \defeq \diag\{H_i\}, \quad
E \defeq \diag\{e_i\}, \quad
\bar R \defeq \diag\{R_{\des{i}}\}, \\
H_i \defeq F_i (sI-A_{\des{i}}-B_{\des{i}} F_i)^{-1} B_{\des{i}}.
\end{gather*}
Note that $H_i$ is the closed-loop map for the separate LQR problem associated with $\des{i}$ defined in \cref{prop:dec_lqr}. 

Now perform the same loop-shifting transformation as in the proof of \cref{lem:lehtomaki} to \cref{fig:perturb_dec} to obtain \cref{fig:perturb_dec2}. Since $R$ and $\Delta_\textup{dec}$ are block-diagonal, the uncertainty block in \cref{fig:perturb_dec2} is also block-diagonal. Our goal is to apply the structured small gain theorem \cite[Thm.~11.8]{zdg}, which is a generalization of the small gain theorem that applies when the uncertainty is structured.

\begin{figure}[ht]
	\centering
	\begin{tikzpicture}[>=latex, node distance=4mm,semithick]
		\node[draw, inner sep=2mm] (L) {$R^{1/2}(\1^\tp \bar H E+I)R^{-1/2}$};
		\node[draw, inner sep=2mm, below= of L] (D) {$R^{1/2}(I-\Delta_\textup{dec}^{-1})R^{-1/2}$};
		\draw[->] (L.east) -- ++(0.8,0) |- (D);
		\draw[<-] (L.west) -- ++(-0.8,0) |- (D);
	\end{tikzpicture}
	\caption{Transformation of \cref{fig:perturb_dec}. $\bar H \defeq \diag\{H_i\}$ is the block-diagonal concatenation of the closed-loop maps associated with the $N$ centralized LQR sub-problems that make up the optimal decentralized LQR controller.}
	\label{fig:perturb_dec2}
\end{figure}
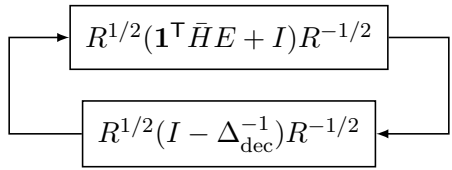

To this end, we state an intermediate lemma, which relates the structured singular value of the optimal closed-loop map to the $N$ separate Kalman inequalities \eqref{eq:kalman_inequality_decent}.

\begin{lem}\label{lem:mu}
Consider the setting of \cref{thm:main}, where $\bar H$, $H_i$, $E$, and $\1$ are defined as above. The following inequality holds:
\begin{equation*}
\sup_{\omega\in\R} \, \mu_\Delta\Bigl( R^{1/2}(\1^\tp \bar H(j\omega) E + I)R^{-1/2} \Bigr)
\leq\,\, \max_{1\leq i \leq N}\, \normm{R_i^{1/2}e_i^\tp \bigl(I+H_i\bigr)e_i R_i^{-1/2}}_\infty,	
\end{equation*}
where $\mu_\Delta(\cdot)$ denotes the structured singular value corresponding to the block-diagonal structure of $\Delta_\textup{dec}$.
\end{lem}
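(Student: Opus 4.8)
The plan is to reduce the structured singular value on the left-hand side to ordinary largest singular values of a few small matrices, by exploiting the \emph{block-lower-triangular} structure of the decentralized closed-loop map. Concretely, I would first show that $\1^\tp\bar H E$ is block-lower-triangular with respect to the partition of $u(t)$, with $i$-th diagonal block equal to $e_i^\tp H_i e_i$. By \cref{prop:dec_lqr}, the $i$-th controller output depends only on the states $\xi_j$ with $j\in\anc{i}$, and, because $\bar H=\diag\{H_i\}$ and $E=\diag\{e_i\}$ are block-diagonal, each such state is driven only by the corresponding perturbation channel; propagating this through the factorization, the $(i,j)$ block of $\1^\tp\bar H E$ equals $I_{i,\des{j}}H_j e_j$ when $j\in\anc{i}$ and is zero otherwise. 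Since the nodes are numbered so that $j\in\anc{i}$ implies $j\le i$, the matrix $\1^\tp\bar H E$ is block-lower-triangular; and since the selector $I_{i,\des{i}}$ coincides with $e_i^\tp$ (both extract the node-$i$ block from the $\des{i}$-structured input space), its $i$-th diagonal block is $e_i^\tp H_i e_i$. Adding the identity and using $e_i^\tp e_i=I$ turns the $i$-th diagonal block into $e_i^\tp(I+H_i)e_i$, and conjugating by the block-diagonal matrices $R^{\pm1/2}$ preserves block-lower-triangularity while turning the $i$-th diagonal block into $R_i^{1/2}e_i^\tp(I+H_i)e_i R_i^{-1/2}$.

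Next I would invoke the elementary fact that, for any matrix $M$ that is block-triangular with respect to the block structure of $\Delta_\textup{dec}=\diag\{\Delta_i\}$, one has $\mu_\Delta(M)\le\max_{1\le i\le N}\bar\sigma(M_{ii})$, where $M_{ii}$ denotes the $i$-th diagonal block of $M$. Indeed, for any block-diagonal $\Delta=\diag\{\Delta_i\}$ the matrix $I-M\Delta$ is again block-triangular, so $\det(I-M\Delta)=\prod_i\det(I-M_{ii}\Delta_i)$; this product can vanish only if $\det(I-M_{ii}\Delta_i)=0$ for some $i$, which forces $\bar\sigma(\Delta_i)\ge\bar\sigma(M_{ii})^{-1}$, hence $\bar\sigma(\Delta)=\max_j\bar\sigma(\Delta_j)\ge(\max_j\bar\sigma(M_{jj}))^{-1}$. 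Since $\mu_\Delta(M)^{-1}$ equals the smallest value of $\bar\sigma(\Delta)$ over block-diagonal $\Delta$ with $\det(I-M\Delta)=0$, this gives the claimed bound. (Equality in fact holds, attained by perturbing a single maximizing block, but only the upper bound is needed here.)

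Finally I would combine the two steps. Applying the triangular bound to $M(j\omega)\defeq R^{1/2}(\1^\tp\bar H(j\omega)E+I)R^{-1/2}$ with the diagonal blocks computed above gives, for every $\omega\in\R$,
\[
\mu_\Delta\!\bigl(R^{1/2}(\1^\tp\bar H(j\omega)E+I)R^{-1/2}\bigr)\;\le\;\max_{1\le i\le N}\bar\sigma\!\bigl(R_i^{1/2}e_i^\tp(I+H_i(j\omega))e_i R_i^{-1/2}\bigr).
\]
Taking the supremum over $\omega$, interchanging it with the finite maximum over $i$, and identifying $\sup_\omega\bar\sigma(\cdot)$ with $\normm{\cdot}_\infty$ then yields the statement of the lemma. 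I expect the only genuinely fiddly part to be the index bookkeeping needed to certify both the block-lower-triangular sparsity pattern of $\1^\tp\bar H E$ and the precise form $e_i^\tp(I+H_i)e_i$ of its diagonal blocks; this is exactly where the standing hypotheses enter, since block-diagonality of $B$ is what allowed writing $\hat B=\diag\{B_{\des{i}}e_i\}$ and hence $\bar H E=\diag\{H_i e_i\}$, while block-diagonality of $R$ is what makes $R^{\pm1/2}$ respect the partition in the conjugation. The triangular-$\mu$ bound and the $\sup$/$\max$ interchange are routine.
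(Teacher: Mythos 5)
Your proposal is correct and follows essentially the same route as the paper: establish that $M=R^{1/2}(\1^\tp\bar H E+I)R^{-1/2}$ is block-lower-triangular with diagonal blocks $R_i^{1/2}e_i^\tp(I+H_i)e_iR_i^{-1/2}$, factor $\det(I-M\Delta)$ over the blocks to get $\mu_\Delta(M)\le\max_i\bar\sigma(M_{ii})$, and take the supremum over $\omega$. The only (immaterial) difference is that you certify the triangularity by directly computing the $(i,j)$ blocks of $\1^\tp\bar H E$, whereas the paper deduces it from the fact that transfer matrices with sparsity conforming to the transitive closure $S$ of the DAG form an algebra, so the closed-loop map inherits the lower-triangular pattern.
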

\begin{proof}
Let $M \defeq R^{1/2}(\1^\tp \bar H(j\omega) E + I)R^{-1/2}$. Since the plant and controller each have transfer functions structured according to the adjacency matrix $S$ of the transitive closure of the associated DAG, they form an algebra. Consequently, all products, inverses, and linear fractional transformations preserve the structure, and in particular, so does the closed-loop map $H_\textup{dec}$. Therefore, $M$ has a block-sparsity structure conforming to $S$. Since the nodes are assumed to be ordered according to the partial ordering of the DAG, $S$ is lower-triangular and so $M$ is block-lower triangular. 

Let $\mathbf{\Delta} \defeq \set{\diag\{\Delta_i\} }{\Delta_i\in\C^{m_i\times m_i}}$. For any $\Delta\in\mathbf{\Delta}$,
\[
	\det(I-M\Delta) = \prod_{i=1}^N \det(I-M_{ii}\Delta_i)
\]
and we can simplify $M_{ii}$ based on the definition as
\begin{align}\label{Miiineq}
	M_{ii} &= e_i^\tp \left( R^{1/2}(\1^\tp \bar H(j\omega) E + I)R^{-1/2} \right) e_i \notag \\
	&= R_i^{1/2} e_i^\tp ( \1^\tp \bar H(j\omega) E + I)e_i R_i^{-1/2} \notag \\
	&= R_i^{1/2} e_i^\tp (H_i(j\omega) + I)e_i R_i^{-1/2}.
\end{align}
By the definition of the structured singular value, 
\begin{align*}
\mu_\Delta(M)
&= \frac{1}{\min\set{\norm{\Delta}}{\det(I-M\Delta)=0,\Delta\in\mathbf{\Delta}}} \\
&= \frac{1}{\min\set{\norm{\Delta}}{\det(I-M_{ii}\Delta_i)=0 \text{ for some }i}} \\
&\leq \frac{1}{\min_{i}\min\set{\norm{\Delta_i}}{\det(I-M_{ii}\Delta_i)=0}} \\
&= \max_{1\leq i \leq N}\,\norm{M_{ii}}.
\end{align*}
The last step follows from the fact that $\mu_{\Delta_i}(M_{ii}) = \norm{M_{ii}}$ because $\Delta_i$ is unstructured. Substituting in $M_{ii}$ from \eqref{Miiineq} and taking the supremum over $\omega\in\R$ completes the proof.
\end{proof}
\bigskip

Inverting the Kalman inequalities in \eqref{eq:kalman_inequality_decent} and converting them into $\Hinf$ norms as in the proof of \cref{lem:lehtomaki}, we obtain 
\[
	\normm{R_{\des{i}}^{1/2}\bigl(I + H_i\bigr)R_{\des{i}}^{-1/2}}_\infty \leq 1
	\quad\text{for }i=1,\dots,N.
\]
Since for any matrix $M\in\C^{p\times q}$, the (spectral) norm of $M$ is lower-bounded by the norm of any submatrix of $M$, have a similar inequality for $\Hinf$ norms, and together with the fact that $R$ is block-diagonal, we deduce that
\[
	\normm{R_i^{1/2}e_i^\tp \bigl(I+H_i\bigr)e_i R_i^{-1/2}}_\infty
	\leq
	\normm{R_{\des{i}}^{1/2}\bigl(I + H_i\bigr)R_{\des{i}}^{-1/2}}_\infty	
\]
The two above inequalities together with \cref{lem:mu} imply that
\[
\sup_{\omega\in\R} \, \mu_\Delta\Bigl( R^{1/2}(\1^\tp \bar H(j\omega) E + I)R^{-1/2} \Bigr)\\
\leq 1.
\]
We can now apply the structured small gain theorem \cite[Thm.~11.8]{zdg} and conclude that the interconnection of \cref{fig:perturb_dec2} is well-posed and stable whenever $\Delta_\textup{dec} = \diag\{\Delta_i\}$ satisfy
\[
\normm{R^{1/2}(I-\Delta_\textup{dec}^{-1})R^{-1/2} }_\infty < 1.	
\]
Due to the block-diagonal structure of the uncertainty, this is equivalent to
\[
\normm{R_i^{1/2}(I-\Delta_i^{-1})R_i^{-1/2} }_\infty < 1\quad\text{for }i=1,\dots,N	
\]
which is equivalent to \eqref{eq:new_lehto}.
\end{proof}
\bigskip

Equipped with \cref{thm:main}, we can specialize the decentralized LQR robustness result to the case where each input channel is perturbed using either a pure gain or a pure phase shift. This leads us to a decentralized version of \cref{cor:centralized_vector_robustness}.

\begin{cor}\label{cor:main}
	Consider the decentralized LQR setting of \cref{thm:main}. 
	Each input $u_i(t)$ independently has gain margin $\frac{1}{2}<k_i<\infty$ and phase margin $-60\text{\textdegree} < \phi_i < 60\text{\textdegree}$.
\end{cor}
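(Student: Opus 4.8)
The plan is to derive \cref{cor:main} as a direct specialization of \cref{thm:main}, mirroring exactly how \cref{cor:centralized_vector_robustness} follows from \cref{lem:lehtomaki}. The key observation is that, by \cref{thm:main}, internal stability of the perturbed interconnection in \cref{fig:perturb_dec} is guaranteed for every block-diagonal LTI perturbation $\Delta_\textup{dec} = \diag\{\Delta_i\}$ whose blocks satisfy the frequency-domain inequality $\Delta_i(j\omega)^*R_i + R_i\Delta_i(j\omega) \succ R_i$ for all $\omega\in\R$ and all $i$. So the task reduces to checking that two simple families of scalar-in-spirit perturbations of a single channel — a pure gain $k_i I$ and a pure phase shift $e^{j\phi_i} I$ — fall inside this admissible set.

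First I would treat the gain case. Take $\Delta_i = k_i I$ with $k_i$ a positive real scalar, and all other $\Delta_j = I$ (which trivially satisfies $I^*R_j + R_j I = 2R_j \succ R_j$ since $R_j \succ 0$). Then $\Delta_i(j\omega)^*R_i + R_i\Delta_i(j\omega) = 2k_i R_i$, so the condition $2k_i R_i \succ R_i$ holds if and only if $k_i > \tfrac12$. The upper bound $k_i < \infty$ is immediate: the inequality is satisfied for all finite $k_i > \tfrac12$, and the standard convention is that the gain margin extends to $+\infty$ when no finite upper gain destabilizes the loop. Hence each channel independently tolerates any gain in $(\tfrac12,\infty)$.

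Next I would treat the phase case. Take $\Delta_i = e^{j\phi_i} I$ for a fixed real $\phi_i$ (again with the other blocks equal to $I$). Then $\Delta_i^*R_i + R_i\Delta_i = (e^{-j\phi_i} + e^{j\phi_i})R_i = 2\cos\phi_i\, R_i$, and the requirement $2\cos\phi_i\, R_i \succ R_i$ is equivalent to $\cos\phi_i > \tfrac12$, i.e. $|\phi_i| < 60^\circ$. Combining the two cases, \cref{thm:main} certifies closed-loop stability under any independent pure-gain perturbation in $(\tfrac12,\infty)$ or pure-phase perturbation in $(-60^\circ,60^\circ)$ on each channel, which is exactly the claimed gain and phase margins. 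I do not anticipate a genuine obstacle here — the only subtlety worth a sentence is noting that the other blocks are set to the identity so that \eqref{eq:new_lehto} is met simultaneously for all $i$, making the margins genuinely \emph{independent} across channels rather than only valid one channel at a time; this is precisely the strengthening that the block-diagonal structure of $R$ and $B$ buys us, just as in \cref{cor:centralized_vector_robustness}.
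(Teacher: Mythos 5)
Your proposal is correct and follows exactly the route the paper intends: the paper gives no explicit proof of \cref{cor:main}, treating it as an immediate specialization of \cref{thm:main} to pure-gain and pure-phase perturbations, and your computations ($2k_iR_i \succ R_i \iff k_i > \tfrac12$ and $2\cos\phi_i\,R_i \succ R_i \iff |\phi_i| < 60^\circ$, with the remaining blocks set to the identity) are precisely the verification being left to the reader.
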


\section{Discussion}\label{sec:diss}\label{sec:discussion}

\cref{thm:main} provides conditions for the robust stability of the optimal decentralized linear quadratic regulator, under the additional assumptions that $B$ and $R$ are block-diagonal and different perturbations are applied to each input $u_i$.

The assumption that $B$ and $R$ are block-diagonal is critical. We will demonstrate using a simple numerical example that the gain margin $\frac{1}{2}<k_i<\infty$ established in \cref{cor:main} no longer applies when either $B$ or $R$ is not block-diagonal.

Consider a two-node DAG with graph $1 \to 2$ and global plant dynamics given by
\begin{equation}\label{eq:ex}
	\dot{x}=\begin{bmatrix}1&0\\
	1&1\end{bmatrix}x+\begin{bmatrix}1&0\\
	\beta&1\end{bmatrix}u,
	\end{equation}
cost matrices $Q=\sbmat{3 & 1 \\ 1 & 3}$ and $R=\sbmat{100 & \rho \\ \rho & 100}$. We use the perturbation $\Delta_\textup{dec} = \sbmat{k & 0 \\ 0 & 1}$ with $k\in\R$, so node $1$ is perturbed by a static scalar gain while node $2$ remains unperturbed. The perturbed closed-loop matrix is given by
\[
{A}_{\text{CL}}\defeq\bmat{
	1+k F^{11}_1& k F^{12}_1 & 0 \\
	1+\beta F^{11}_1+F^{21}_1& 1+\beta F^{12}_1+F^{22}_1& 0 \\
	1+k \beta F^{11}_1+ F^{21}_1& F^{22}_1-F_2+k \beta F^{12}_1 & 1+F_2
}
\]
where $F^{ij}_1$ and $F_2$ are given by
\begin{align*}
	\addtolength{\arraycolsep}{-0.5mm}\bmat{F^{11}_1 & F^{12}_1 \\ F^{21}_1 & F^{22}_1}
&= \ric\biggl(\addtolength{\arraycolsep}{-0.5mm}\bmat{1 & 0 \\ 1 & 1}, \bmat{1 & 0 \\ \beta & 1}, \bmat{3 & 1 \\ 1 & 3}, \bmat{100& \rho \\ \rho &100} \biggr) \\
F_2 &= \ric( 1, 1, 3, 100) \approx -2.0149.
\end{align*}

The gain margin of input $u_1$ is the range of values of $k$ for which $A_\text{CL}$ is Hurwitz.

We ran two experiments. First, we assumed a diagonal $R$ and triangular $B$, so we fixed $\rho=0$ and varied $\beta$. \cref{fig:graph_decent_example} (top) shows a plot of the pairs $(\beta,k)$ for which $A_\text{CL}$ is Hurwitz (shaded in blue). When $\beta=0$, we confirm the result of \cref{cor:main}; the system is stable for $\frac{1}{2}<k<\infty$, which corresponds to $k > -6\,\text{dB}$ on the plot. But when $\beta\neq 0$, violating the requirement that $B$ be block-diagonal, we observe a severe deterioration in the gain margin.

For the second experiment, we assumed a full $R$ but diagonal $B$, so we fixed $\beta=0$ and varied $\rho$. \cref{fig:graph_decent_example} (bottom) shows a plot of the pairs $(\rho,k)$ for which $A_\text{CL}$ is Hurwitz (shaded in blue). As in the previous example, we confirm the result of \cref{cor:main} when $\rho=0$, but we observe deterioration for some nonzero choices of $\rho$.

The matrices $B$ and $R$ are block-diagonal in many cases of practical interest. For example, consider multi-agent systems, such as drones flying in formation or a platoon of vehicles. In these cases, each control input affects a separate agent, so $B$ is block-diagonal. Also, the total input cost is typically the sum of input costs for each agent, with no coupling. So $R$ is block-diagonal as well.

\section{Conclusion}
\label{sec:conclufuture}
We studied the robustness of optimal decentralized LQR controllers when the plant and controller are structured according to a directed acyclic graph. Specifically, we established that when the $B$ and $R$ matrices are block-diagonal and different LTI perturbations are applied to each input, the controlled system enjoys the same stability margins as in the classical (centralized) LQR case. This is an interesting result because the optimal decentralized LQR controller is dynamic, much like an output-feedback LQG controller, yet LQG controllers have no stability margins.

While this letter only studied the case of LTI perturbations, our approach can be generalized to nonlinear input perturbations, analogous to the results obtained in~\cite{safonov1977gain}.

\begin{figure}[!ht]
	\centering
	\includegraphics[scale=1.2]{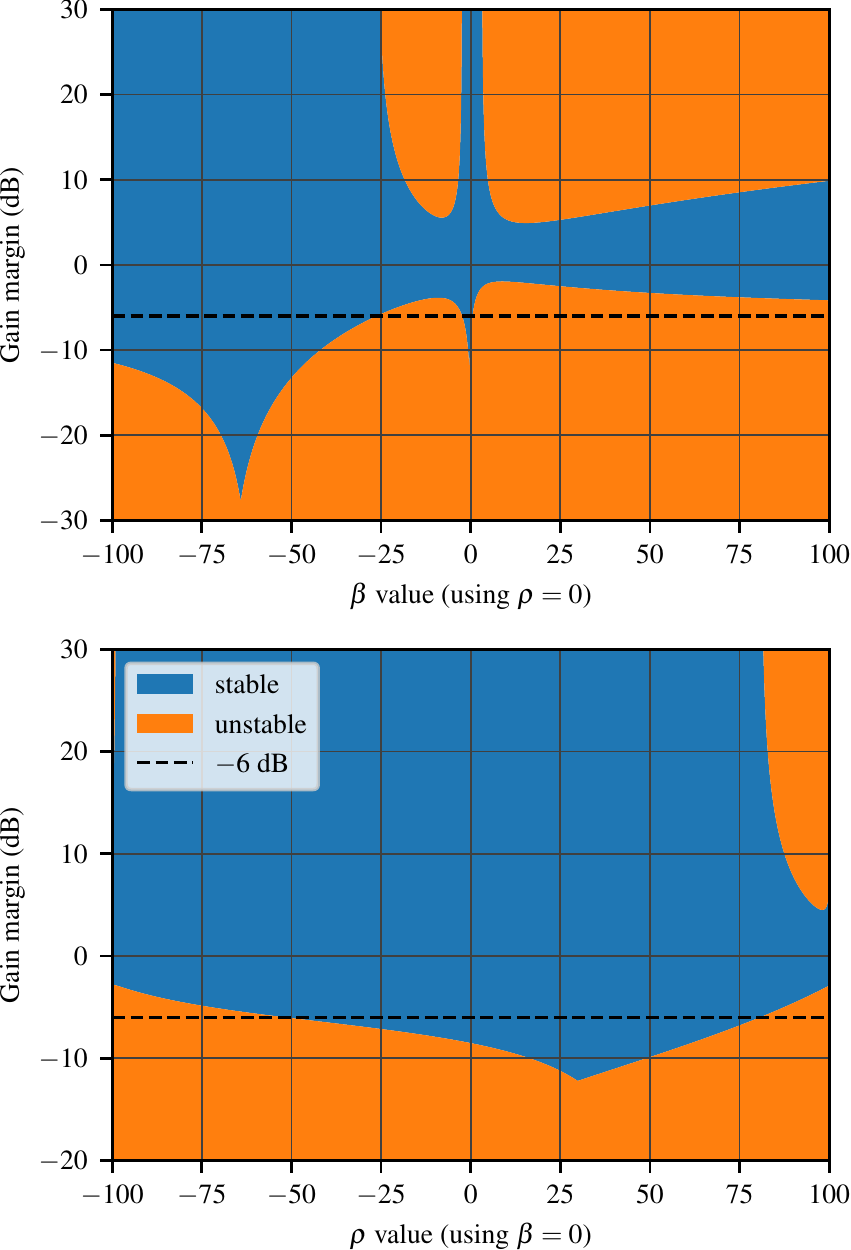}
	\caption{Stability margins for the decentralized LQR example with the dynamics of \cref{eq:ex}. The input $u_1$ is perturbed by a factor of $k$. The top panel uses $\rho=0$ (diagonal $R$) and the blue region shows the $(\beta,k)$ that yield a stable closed loop, with $k$ expressed in decibels (dB). The bottom panel uses $\beta=0$ (diagonal $B$) and the blue region shows the $(\rho,k)$ that yield a stable closed loop.
	When $B$ and $R$ are block-diagonal ($\rho=\beta=0$), we recover \cref{cor:main}, which ensures a  gain margin $\frac{1}{2}<k<\infty$. In other words, $k > -6\,\text{dB}$.}
	\label{fig:graph_decent_example} 
\end{figure}


\newpage\
\newpage
\bibliographystyle{abbrv}
{\small \bibliography{rdlqr}}

\end{document}